\newtheorem{assumption}{\hspace{0pt}\bf Assumption}
\newtheorem{proposition}{\hspace{0pt}\bf Proposition}
\newtheorem{theorem}{\hspace{0pt}\bf Theorem}
\newtheorem{definition}{\hspace{0pt}\bf Definition}
\title{Stability of Neural Networks on \\ Riemannian Manifolds}
\author{Zhiyang~Wang, Luana~Ruiz~
        and~Alejandro~Ribeiro
\thanks{Supported by NSF CCF 1717120, Theorinet Simons. The authors are with the Dept. of Electrical and Systems Eng., Univ. of Pennsylvania. Email: \{zhiyangw,rubruiz,aribeiro\}@seas.upenn.edu.}
}
\begin{document}
%
\maketitle
%
\begin{abstract}
Convolutional Neural Networks (CNNs) have been applied to data with underlying non-Euclidean structures and have achieved impressive successes. This brings the stability analysis of CNNs on non-Euclidean domains into notice because CNNs have been proved stable on Euclidean domains. This paper focuses on the stability of CNNs on Riemannian manifolds. By taking the Laplace-Beltrami operators into consideration, we construct an $\alpha$-frequency difference threshold filter to help separate the spectrum of the operator with an infinite dimensionality. We further construct a manifold neural network architecture with these filters. We prove that both the manifold filters and neural networks are stable under absolute perturbations to the operators. The results also implicate a trade-off between the stability and discriminability of manifold neural networks. Finally we verify our conclusions with numerical experiments in a wireless adhoc network scenario.
\end{abstract}
\begin{IEEEkeywords}
Deep neural networks, Riemannian manifolds, stability analysis
\end{IEEEkeywords}
%


\section{Introduction} \label{sec:intro}

Convolutional neural networks (CNNs) are machine learning architectures made up of layers where each layer composes a bank of convolutional filters with a pointwise nonlinearity. On problems where data is Euclidean, they have become a popular architecture due to their impressive performance in tasks ranging from speech recognition \cite{han2020contextnet} to computer vision \cite{gustafsson2020evaluating}, which is largely attributed to the fact that CNNs are provably stable \cite{bruna2013invariant}. 
But in the physical world, their application is limited because many problems deal with data that is non-Euclidean. This is the case, for instance, of resource allocation in wireless ad-hoc communication networks \cite{wang2020unsupervised}, detection and recognition in social networks \cite{nguyen2017robust} and prediction of influenza epidemic outbreaks \cite{wang2019defsi}. 


In recent years, a myriad of extensions of CNNs to non-Euclidean domains have been proposed to fill that gap \cite{bronstein2017geometric,gama2019convolutional, scarselli2008graph, defferrard2020deepsphere}.
These architectures have been able to reproduce the successes of CNNs on Euclidean domains to a large extent \cite{xu2018powerful, wu2020comprehensive}, which naturally sparks the question of whether non-Euclidean CNNs are also stable. In this paper, we aim to answer this question by analyzing the properties of CNNs defined on the most general type of non-Euclidean domain --- the manifold. By focusing on manifolds, our analysis has the benefit of being broad enough so that it can also be particularized to more specific non-Euclidean domains such as graphs, which can be seen as manifold discretizations. 

Explicitly, we study the stability of manifold CNNs to absolute perturbations of the Laplace-Beltrami operator $\ccalL$ associated with the manifold (Definition \ref{defn:absolute_perturbations}). We start by analyzing the stability of the convolution operation, which is defined as a pointwise operation on the spectrum of $\ccalL$. Given that absolute perturbations of $\ccalL$ spawn absolute perturbations to all of its eigenvalues, designing stable manifold convolutions is challenging because the spectrum of the Laplace-Beltrami operator is infinite-dimensional. We address this challenge by introducing frequency difference threshold (FDT) filters (Definition \ref{def:alpha-filter}), which separate the Laplacian spectrum into groups of eigenvalues that are less than some threshold $\alpha$ apart. We then show that these filters are stable to absolute perturbations of $\ccalL$ (Theorem \ref{thm:stability_filter}), and that this property is inherited by manifold CNNs (Theorem \ref{thm:stability_nn}). The main implication of these results is that there is a trade-off between the stability and discriminability of manifold neural networks in the form of the frequency difference threshold of the FDT filters.

Related work includes a comprehensive study of the stability of graph neural networks (GNNs) in \cite{gama2019stability} and \cite{gama2020stability}, which consider absolute and relative perturbations of the graph structure respectively, and the GNN stability analysis in \cite{zou2020graph}, which focuses on perturbations of the graph spectrum.
More in line with our paper, \cite{ruiz2020graph} studies stability of GNNs to perturbations of the underlying graph model, which is assumed to be a graphon. Unlike manifolds, however, graphons can only model dense graphs. 
More flexible models such as continuous graph models with tunable sparsity and generic topological spaces have been considered in \cite{keriven2020convergence} and \cite{levie2019transferability} respectively, but these papers focus on the transferability and not on the stability of convolutional neural networks in these domains.
The rest of this paper is organized as follows. We start with a brief review of manifolds and Laplacian operators in Section \ref{sec:prelim}. We further introduce the framework of manifold convolutions and neural networks. In Section \ref{sec:filters}, we introduce FDT filters and prove that the filters are stable under absolute perturbations of the Laplacian operator. We then extend this analysis to neural networks. Our results are verified numerically on a power allocation problem in wireless adhoc networks in Section \ref{sec:sims}, and conclusions are presented in Section \ref{sec:conclusions}.


\section{Preliminary definitions} \label{sec:prelim}

In order to analyze the stability properties of neural networks on manifolds, we start by reviewing the notions of signals, Laplacian operators and convolutional neural networks in these domains.

\subsection{Manifolds and manifold signals}
{A manifold is a topological space that is locally Euclidean around each point. More formally, a differentiable $d$-dimensional manifold is a topological space where each point has a neighborhood that is homeomorphic to a $d$-dimensional Euclidean space, i.e., the tangent space.} A Riemannian manifold, denoted $(\ccalM,g)$, is a real and smooth manifold $\ccalM$ equipped with a positive definite inner product $g_x$ on the tangent space $T_x\ccalM$ at each point $x$. The collection of tangent spaces of all points on $(\ccalM, g)$ is denoted as $T\ccalM$, and the collection  
of scalar functions and tangent vector functions on $(\ccalM,g)$ are denoted as $L^2(\ccalM)$ and $L^2(T\ccalM)$ respectively. In this paper, we consider compact Riemannian manifolds $\ccalM$.

\begin{figure}[h]
    \centering
    \includegraphics[width=0.4\textwidth]{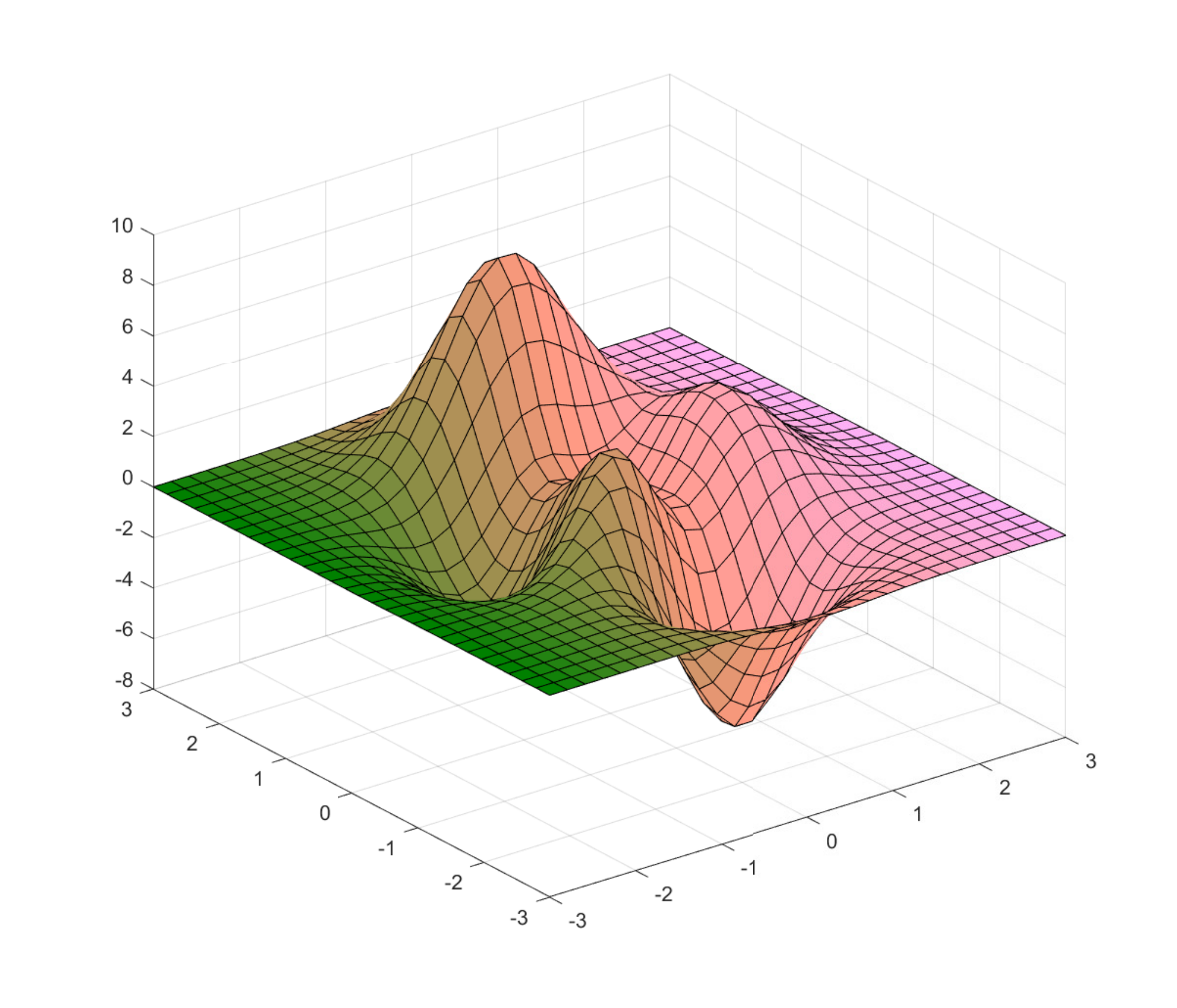}
    \caption{An example of manifold signal where the curve is the underlying manifold structure. Colors stand for values of the signal at the point.}
    \label{fig:manifold}
\end{figure}
Data supported on the manifold $\ccalM$ is represented as manifold signals, which are defined as functions $f\in L^2(\ccalM)$.
Given a manifold signal $f\in L^2(\ccalM):\ccalM\rightarrow \reals$, the Laplace-Beltrami operator $\ccalL$ is defined as
\begin{equation}\label{eqn:Laplacian}
    \ccalL f=-\text{div}(\nabla f),
\end{equation}
where $\nabla: L^2(\ccalM)\rightarrow L^2(T\ccalM)$ is an operator called \emph{intrinsic gradient}. Compared with the classical notion of gradient, which indicates the direction of the fastest change of a function at a given point, the difference here is that the direction indicated by the intrinsic gradient is a tangent vector in $L^2(T\ccalM)$. The operator $\text{div}$ 
is the \emph{intrinsic divergence}, which is adjoint to the gradient operator. Akin to the Laplacian operator in Euclidean domains or the graph Laplacian in graph signal processing (GSP), the Laplace-Beltrami operator defines a notion of shift for signals on a manifold $\ccalM$. In fact, $\ccalL$ has a similar interpretation to that of the graph Laplacian as the difference between the local average function value around a point and the function value at the point itself \cite{bronstein2017geometric}. 

The Laplacian operator $\ccalL$ is a self-adjoint and positive-semidefinite operator. As such, it admits an eigendecomposition given by
\begin{equation} \label{eqn:Laplacian-spectrum}
    \ccalL f=\sum_{i=1}^\infty \lambda_i\langle f, \bm\phi_i \rangle \bm\phi_i.
\end{equation}
where $\lambda_i$ are real eigenvalues and $\bm\phi_i$ are the corresponding eigenfunctions. Because the manifold $\ccalM$ is compact, the spectrum $\{\lambda_i,\bm\phi_i\}_{i\in\naturals^+}$ is additionally discrete. {If the eigenvalues are ordered in increasing order as $0<\lambda_1\leq \lambda_2\leq\lambda_3\leq \hdots$, it can be shown that, for a manifold on $d$ dimensions, $\lambda_i$ grows as $i^{2/d}$ \cite{arendt2009weyl}. 

Similarly to how the eigenvalues of the graph Laplacian are interpreted as frequencies in GSP, we interpret the Laplacian-Beltrami operator eigenvalues $\lambda_i$ as frequencies associated with oscillation modes $\bm\phi_i$.
Moreover, since the eigenfunctions $\bm\phi_1,\bm\phi_2,\hdots$ form an orthonormal basis of $L^2(\ccalM)$, a change of basis operation for signals $f$ can be defined, i.e., manifold signals $f\in L^2(\ccalM)$ can be represented on the manifold's eigenbasis as $f=\sum_{i=1}^\infty \langle f, \bm\phi_i \rangle \bm\phi_i$.}

\subsection{Manifold convolutions and manifold neural networks}

Drawing a parallel with the spectral convolution operation on Euclidean domains, we leverage the eigendecomposition of the Laplacian \eqref{eqn:Laplacian-spectrum} to define the convolution of manifold signals as a pointwise operation in the spectrum of $\ccalL$. Explicitly, we define a manifold convolutional filter \bbh(\ccalL) as
\begin{equation}\label{eqn:operator}
\bbh(\ccalL) f:=\sum_{i=1}^\infty \sum_{k=0}^{K-1} h_k \lambda_i^k \langle f,\bm\phi_i \rangle \bm\phi_i
\end{equation}
where $h_0, \ldots, h_{K-1}$ are the filter coefficients or taps.

Projecting $\bbh(\ccalL)f$ onto $\{\bm\phi_i\}_{i \in \naturals^+}$, we see that the spectral response of the manifold convolution is is given by $h(\lambda) = \sum_{k=0}^{K-1} h_k \lambda^k$. This highlights the fact that the frequency response of a manifold convolution only depends on the coefficients $h_k$ and the eigenvalues of the Laplace-Beltrami operator. It also implies that, if the underlying manifold changes --- and thus $\ccalL$ ---, the behavior of the filter $\bbh$ can be replicated on the new manifold $\ccalM'$ by evaluating $h(\lambda)$ at the eigenvalues of the new Laplacian $\ccalL'$. Another important consequence of the spectral representation of $\bbh(\ccalL)$ being a polynomial is that, as $K \to \infty$, $\bbh(\ccalL)$ can be used to implement any smooth spectral response $h(\lambda)$ with convergent Taylor series around each $\lambda_0 \in \reals$ as $K \to \infty$ \cite{smyth2014polynomial}.


From the definition of the convolution operation on $\ccalM$ \eqref{eqn:operator}, convolutional neural networks (CNNs) are straightforward to define. A CNN consists of a cascade of $L$ layers, each of which contains a bank of convolutional filters followed by a nonlinear activation function. Denoting the nonlinearity $\sigma$, the $l$-th layer of a $L$-layer CNN on the manifold $\ccalM$ is given by:
\begin{equation}\label{eqn:mnn}
f_l^p(x) = \sigma\left( \sum_{q=1}^{F_{l-1}} \bbh_l^{pq}(\ccalL) f_{l-1}^q(x)\right)
\end{equation}
for $l=1,2,\hdots,L$. Each of the filters $\bbh_l^{pq}(\ccalL)$ is as in \eqref{eqn:operator} and maps the $q$-th feature from the $l-1$-th layer to the $p$-th feature of the $l$-th layer for $1\leq q\leq F_{l-1}$ and $1\leq p\leq F_{l}$. The output of this neural network is given by $y^p = f_L^p$ for $1 \leq p \leq F_L$ and the input features at the first layer, $f_0^q$, are the input data $f^q$ for $1\leq q\leq F_0$. For a more concise representation of this CNN, we can alternatively write it as the map $\bby = \bbPhi(\bbH,\ccalL, f)$, where $\bbH$ is a tensor gathering the learnable parameters $\bbh_l^{pq}$ at all layers of the CNN. We will refer to this map as a manifold convolutional neural network or manifold neural network (MNN) for short.


\section{Stability of Manifold Neural Networks} \label{sec:filters}


{In order to characterize the stability properties of MNNs, we first have to study the stability of their main component --- the convolutional filter in \eqref{eqn:operator}. In particular, we analyze the stability of convolutional filters to absolute perturbations of the Laplace-Beltrami operator, which are specified in Definition \ref{defn:absolute_perturbations}.}


\begin{definition}[Absolute perturbations] \label{defn:absolute_perturbations}
Let $\ccalL$ be the Laplace-Beltrami operator of a Riemannian manifold $\ccalM$. An absolute perturbation of $\ccalL$ is defined as
\begin{equation}\label{eqn:perturb}
\ccalL'=\ccalL+\bbA,
\end{equation}
where the absolute perturbation operator $\bbA$ is symmetric.
\end{definition}

The absolute perturbation model introduced in Definition \ref{defn:absolute_perturbations} only requires $\bbA$ to be symmetric. Thus, it is a rather generic perturbation model encompassing many different types of perturbations and, in particular, allowing to model a wide array of perturbations to the underlying manifold $\ccalM$.

\subsection{Frequency difference threshold (FDT) filters}

{Given the spectral decomposition of the Laplace-Beltrami operator \eqref{eqn:Laplacian}, we can expect an absolute perturbation of $\ccalL$ to spawn some sort of perturbation to the eigenvalues $\lambda_i$. Since the spectral convolution operation \eqref{eqn:operator} depends on the evaluation of $h(\lambda)$ at each $\lambda_i$, its stability analysis will depend on the individual effects of the perturbation on each of these eigenvalues. A challenge in the case of manifolds is that the spectrum of $\ccalL$ is infinite-dimensional, i.e., there is an infinite (albeit countable) number of eigenvalues $\lambda_i$. However, it is possible to show that these eigenvalues accumulate in certain parts of the spectrum. This is demonstrated in Proposition \ref{prop:finite_num}.}


\begin{proposition} \label{prop:finite_num}
Let $(\ccalM,g)$ be a $d$-dimensional Riemannian manifold with Laplacian-Beltrami operator $\ccalL$, and let $\lambda_k$ denote the eigenvalues of $\ccalL$. Let $C_1$ denote an arbitrary constant and let $C_d$ be the volume of the $d$-dimensional unit ball. For any $\alpha > 0$, there exists $N_1$ given by
\begin{equation}
    N_1=\lceil (\alpha d/C_1)^{d/(2-d)}(C_d \text{Vol}(\ccalM,g))^{2/(2-d)} \rceil
\end{equation}
such that, for all $k>N_1$, it holds that $$\lambda_{k+1}-\lambda_k\leq \alpha.$$
\end{proposition}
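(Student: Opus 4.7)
The argument will rest on Weyl's asymptotic law for the eigenvalues of the Laplace--Beltrami operator on a compact Riemannian manifold, the same result invoked in the paragraph preceding the proposition to state that $\lambda_i$ grows like $i^{2/d}$ \cite{arendt2009weyl}. First I will write down Weyl's law for the counting function
\begin{equation*}
N(\lambda) := \#\{i : \lambda_i \leq \lambda\} = \frac{C_d\,\text{Vol}(\ccalM,g)}{(2\pi)^d}\,\lambda^{d/2}\,(1+o(1)),
\end{equation*}
and invert it to obtain, for $k$ sufficiently large, an expression of the form $\lambda_k \sim C_1 (k/(C_d\,\text{Vol}(\ccalM,g)))^{2/d}$, with the constant $C_1$ absorbing the factor $(2\pi)^2$. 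This lets me view the discrete sequence $\{\lambda_k\}$ as samples of a smooth envelope $g(k) = C_1 \bigl(k/(C_d\,\text{Vol}(\ccalM,g))\bigr)^{2/d}$.

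In the second step I will exploit that, in the regime $d \geq 3$ (where the stated exponent $d/(2-d)$ is negative and the gaps are expected to shrink), the envelope $g$ is strictly concave, so consecutive differences are dominated by the derivative:
\begin{equation*}
\lambda_{k+1} - \lambda_k \;\leq\; g'(k) \;=\; \frac{2 C_1}{d}\bigl(C_d\,\text{Vol}(\ccalM,g)\bigr)^{-2/d}\,k^{\,2/d - 1}.
\end{equation*}
Third, I will solve the inequality $g'(k) \leq \alpha$ for $k$. Isolating $k$ and using that $(2-d)/d < 0$ (so that exponentiation flips the inequality) yields
\begin{equation*}
k \;\geq\; \left(\frac{\alpha d}{C_1}\right)^{d/(2-d)}\bigl(C_d\,\text{Vol}(\ccalM,g)\bigr)^{2/(2-d)},
\end{equation*}
matching the claimed formula for $N_1$ after taking ceilings and folding the factor of $2$ into the constant $C_1$.

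The delicate step, and the main obstacle I foresee, is the passage from the \emph{averaged} statement of Weyl's law about $N(\lambda)$ to a \emph{pointwise} bound on the individual gap $\lambda_{k+1} - \lambda_k$: the actual eigenvalues can fluctuate locally around the smooth envelope, so bounding gaps by $g'(k)$ is not instantaneous. The cleanest route I see around this is a contradiction argument. If some $k > N_1$ satisfied $\lambda_{k+1} - \lambda_k > \alpha$, then the interval $(\lambda_k,\lambda_k+\alpha]$ would contain no eigenvalues, forcing $N(\lambda_k+\alpha) - N(\lambda_k) = 0$. Applying the mean value theorem to $\lambda^{d/2}$ together with Weyl's law, however, gives $N(\lambda_k+\alpha) - N(\lambda_k) \geq c\,\lambda_k^{d/2-1}\alpha$ for $\lambda_k$ large enough, which exceeds $1$ precisely once $k$ clears the threshold above (with $C_1$ chosen once and for all to absorb the remaining asymptotic slack). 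This contradicts the assumed absence of eigenvalues in $(\lambda_k,\lambda_k+\alpha]$ and completes the proof.
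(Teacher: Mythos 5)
Your first three steps---inverting Weyl's law to the envelope $g(k)=C_1\bigl(k/(C_d\,\text{Vol}(\ccalM,g))\bigr)^{2/d}$, bounding the gap by $g'(k)$, and solving $g'(k)\leq\alpha$ for $k$---are exactly the computation the paper compresses into ``direct consequence of Weyl's law,'' and your algebra reproduces the stated $N_1$ once the factor $2$ is folded into the arbitrary constant $C_1$. So at that level you are on the paper's route, and you are right to single out the passage from the averaged law to a pointwise gap bound as the delicate point.

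The genuine gap is that your proposed repair of that point does not work. Weyl's law controls the counting function only up to an error term: $N(\lambda)=c\lambda^{d/2}(1+o(1))$ in the form you quote, and $N(\lambda)=c\lambda^{d/2}+O(\lambda^{(d-1)/2})$ even in its sharpest general form. The increment of the main term over a window of fixed width $\alpha$ is of order $\lambda^{d/2-1}\alpha$, which is smaller than the remainder $O(\lambda^{(d-1)/2})$ by a factor of $\lambda^{1/2}$, and far smaller than the $o(\lambda^{d/2})$ slack in the version you wrote down. Consequently the bound $N(\lambda_k+\alpha)-N(\lambda_k)\geq c\,\lambda_k^{d/2-1}\alpha$ is not a consequence of Weyl's law, no choice of $C_1$ ``absorbs the asymptotic slack,'' and the contradiction never materializes. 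The obstruction is not merely technical: on the round sphere $S^d$ the eigenvalues counted with multiplicity form clusters at $\ell(\ell+d-1)$ with multiplicities of order $\ell^{d-1}$, fully consistent with Weyl's law, yet consecutive distinct eigenvalues are separated by gaps growing like $2\ell$, so gaps exceeding any fixed $\alpha$ occur for arbitrarily large $k$. Any rigorous pointwise gap statement therefore needs input beyond the counting asymptotics (some hypothesis excluding such spectral clustering); your steps 1--4 yield the stated $N_1$ only under the idealization that $\lambda_k$ coincides exactly with its Weyl envelope, which is also all that the paper's one-line proof can be said to establish.
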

\begin{proof}
This is a direct consequence of Weyl's law \cite{arendt2009weyl}.
\end{proof}
%

{Proposition \ref{prop:finite_num} is important because it suggests a strategy to mitigate the challenge posed by the infinite-dimensional spectrum of $\ccalL$. Since eigenvalues accumulate in certain parts of this spectrum, for $\alpha >0$ we can gather eigenvalues that are less than $\alpha$ apart in a \textit{finite} number of groups. This $\alpha$-separated spectrum, formalized in Definition \ref{def:alpha-spectrum}, is achieved by the so-called frequency difference threshold (FDT) filters introduced in Definition \ref{def:alpha-filter}.}







\begin{definition}[$\alpha$-separated spectrum]\label{def:alpha-spectrum}
The $\alpha$-separated spectrum of a Laplace-Beltrami operator $\ccalL$ is defined as the union of the set of $\alpha$-separated eigenvalues $\ccalD$,
$$\ccalD =\{\lambda_i : i\geq 1,\ \lambda_{i}-\lambda_{i-1}>\alpha,\ \lambda_{i+1}-\lambda_i>\alpha \}$$
and the set of $\alpha$-close eigenvalues $\ccalN = \bar{\ccalD}$,
\begin{align*}
\ccalN:=\{\ccalN_1,\ccalN_2,\cdots,\ccalN_N\} \mbox{ s.t. } \min|\lambda_i-\lambda_j|>\alpha \\
\mbox{ for all } \lambda_i\in\ccalN_a, \lambda_j\in\ccalN_b, a\neq b.
\end{align*}
\end{definition}

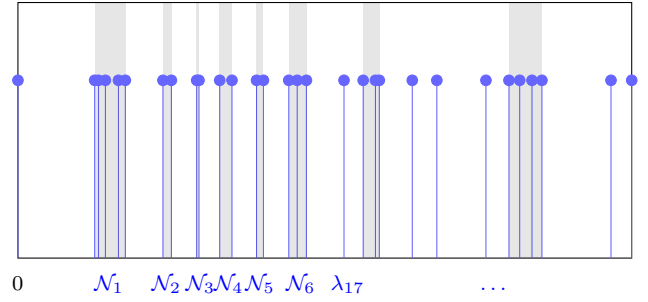
\begin{figure}[t]
    \centering

\pgfplotsset{xtick style={draw=none}}

\def \thisplotscale {3.4}
\def \unit {\thisplotscale cm}

\def \frequencyresponse 
     {   0.8}

\begin{tikzpicture}[x = 1*\unit, y=1*\unit]
 
\begin{axis}[scale only axis,
             width  = 2.4*\unit,
             height = 1*\unit,
             xmin = 0, xmax=8,
             xtick = { 0,  1.14 ,  1.89,2.33, 2.7 ,3.11, 3.64, 4.25, 6.1},
             xticklabels = {\black{\footnotesize $0$},
             				\blue{\footnotesize $\ \ccalN_1$}, 
                            \blue{\footnotesize $\ \ccalN_2$}, 
                            \blue{\footnotesize $\ \ccalN_3$}, 
                            \blue{\footnotesize $\ \ccalN_4$}, 
                            \blue{\footnotesize $\ \ccalN_5$}, 
                            \blue{\footnotesize $\ \ccalN_6$}, 
                            \blue{\footnotesize $\ \lam_{17}$},
                            \blue{\footnotesize $\ \ \ldots$},},
             ymin = -0, ymax = 1.15,
             ytick = {-1},
             typeset ticklabels with strut,
             enlarge x limits=false]

\addplot+[samples at = { 0, 1.00, 1.05,  1.14,
  1.31,  1.40, 1.89, 2.00,2.33,  2.36,    2.63, 2.79,      
   3.11 ,3.20,  3.53, 3.64, 3.76, 4.25, 4.50,4.66, 4.71, 5.14,  5.46, 6.10, 6.40,6.54, 6.70, 6.83, 7.73, 8}, 
          color = blue!60, 
          ycomb, 
          mark=otimes*, 
          mark options={blue!60}]
         {\frequencyresponse};
         
\addplot [fill=black, fill opacity=0.1, draw opacity = 0]
       coordinates {
            (1, 0) (1.4, 0) (1.4, 1.15) (1, 1.15)  };       
            
\addplot [fill=black, fill opacity=0.1, draw opacity = 0]
       coordinates {
            (1.89, 0) (2, 0) (2, 1.15) (1.89, 1.15)  };        
            
\addplot [fill=black, fill opacity=0.1, draw opacity = 0]
       coordinates {
            (2.33, 0) (2.36, 0) (2.36, 1.15) (2.33, 1.15)  };
            
\addplot [fill=black, fill opacity=0.1, draw opacity = 0]
       coordinates {
            (2.63, 0) (2.79, 0) (2.79, 1.15) (2.63, 1.15)  };
            
\addplot [fill=black, fill opacity=0.1, draw opacity = 0]
       coordinates {
            (3.11, 0) (3.2, 0) (3.2, 1.15) (3.11, 1.15)  };
            
\addplot [fill=black, fill opacity=0.1, draw opacity = 0]
       coordinates {
            (3.53, 0) (3.76, 0) (3.76, 1.15) (3.53, 1.15)  };
            
\addplot [fill=black, fill opacity=0.1, draw opacity = 0]
       coordinates {
            (4.5, 0) (4.71, 0) (4.71, 1.15) (4.5, 1.15)  }; 
            
\addplot [fill=black, fill opacity=0.1, draw opacity = 0]
       coordinates {
            (6.4, 0) (6.83, 0) (6.83, 1.15) (6.4, 1.15)  };            


\end{axis}
\end{tikzpicture}

    \caption{Eigenvalues of a Laplacian operator. Observe that the eigenvalues of the Laplacian operator tend to be grouped in certain parts of the spectrum.}
    \label{fig:eigenvalues_laplacian}
\end{figure}



\begin{definition}[$\alpha$-FDT filter]\label{def:alpha-filter}
On the manifold $\ccalM$, an $\alpha$-frequency difference threshold ($\alpha$-FDT) filter is a filter whose frequency response $h(\lambda)$ satisfies
\begin{equation} \label{eq:fdt-filter}
    h(\lambda)=C_n,\quad \lambda\in [\min_{i\in\ccalN_n}\lambda_i,\max_{j\in\ccalN_n}\lambda_j].
\end{equation}
for all $n=1,2,\hdots,N$. This makes the convolution operation as:
\begin{equation} \label{eqn:fdt-convolution}
\bbh(\ccalL)=\sum_{i\in \ccalD} h(\lambda_i)\langle f,\bm\phi_i\rangle \bm\phi_i +\sum_{n=1}^N C_n\langle f, E_n\rangle E_n,
\end{equation}
where $E_n$ is the eigenspace composed of $\{\bm\phi_i\}_{i\in\ccalN_n}$.
\end{definition}

{The eigenvalues $\lambda_i$ in the set $\ccalD$ are at least $\alpha$ apart from their preceding and succeeding eigenvalues $\lambda_{i-1}$ and $\lambda_{i+1}$. Hence, according to Definition \ref{def:alpha-filter}, an FDT filter with threshold $\alpha$ treats them as independent eigenvalues, like a conventional convolutional filter \eqref{eqn:filter_function} would.
The eigenvalues in the complement set $\ccalN = \bar\ccalD$ are divided in groups $\ccalN_a$ where each group contains eigenvalues that are less than $\alpha$ apart from at least one their neighbors. As such, the distance between any two groups $\ccalN_a, \ccalN_b$, $a \neq b$, is larger than $\alpha$. To achieve spectrum separation, the $\alpha$-FDT filter imposes a constant frequency response $h(\lambda_i)=C_a$ for all $i \in \ccalN_a$.} In other words, it treats all the eigenvalues in a group $\ccalN_a$ as the same. Note that, while the $h(\lambda)$ in Definition \ref{def:alpha-filter} is not a smooth function, we can still obtain a smooth approximation of an FDT filter in the form of \eqref{eqn:operator}.


\begin{figure}[t]
    \centering
    \input{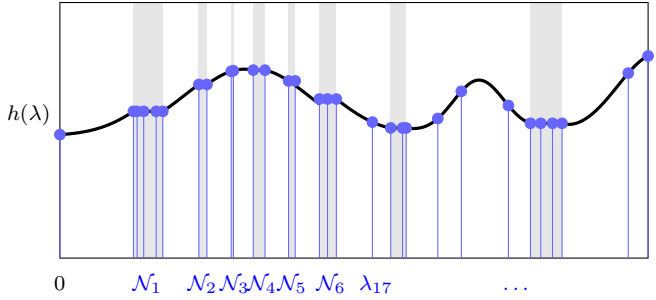}    
    \caption{An $\alpha$-FDT filter that separates the spectrum of the Laplacian operator by grouping eigenvalues that are less than $\alpha$ apart in sets $\ccalN_i$. Note that the frequency response is constant for all eigenvalues in a group $\ccalN_i$.}
    \label{fig:filter-response}
\end{figure}

\subsection{FDT filter stability}

{Thanks to spectrum separation, $\alpha$-FDT manifold filters can be shown to be stable under the absolute perturbations to the Laplace-Beltrami operator in Definition \ref{defn:absolute_perturbations}. 
We can state and prove this in Theorem \ref{thm:stability_filter} under Assumption \ref{ass:filter_function}.}

\begin{assumption}\label{ass:filter_function}
The filter function $h:\reals\rightarrow\reals$ is $B$- Lipschitz continuous and non-amplifying, i.e.,
\begin{equation}\label{eqn:filter_function}
    |h(a)-h(b)|\leq B|a-b|,\quad |h(a)|< 1.
\end{equation}
\end{assumption}

\begin{theorem}[FDT filter stability]\label{thm:stability_filter}
Let $\ccalM$ be a manifold with Laplace-Beltrami operator $\ccalL$. Let $f$ be a manifold signal and $\bbh(\ccalL)$ an $\alpha$-FDT filter on $\ccalM$ [cf. Definition \ref{def:alpha-filter}]. Consider an absolute perturbation $\ccalL'=\ccalL + \bbA$ of the Laplace-Beltrami operator $\ccalL$ [cf. Definition \ref{defn:absolute_perturbations}] where $\|\bbA\| = \epsilon \leq \alpha$. Then, under Assumption \ref{ass:filter_function} it holds
\begin{equation}\label{eqn:stability_filter}
\| \bbh(\ccalL) f -\bbh(\ccalL')f \| \leq  \frac{\pi(D+N)}{\alpha-\epsilon}\epsilon\|f\| + BD\epsilon\|f\|
\end{equation}
where $D$ and $N$ are the cardinalities of sets $\ccalD$ and $\ccalN$.
\end{theorem}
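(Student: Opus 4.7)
The strategy is to decompose the error $\bbh(\ccalL)f - \bbh(\ccalL')f$ along the spectral partition $\ccalD \cup \{\ccalN_n\}_{n=1}^N$ and bound each piece using (i) Weyl's inequality for eigenvalue perturbations and (ii) a Davis--Kahan style $\sin\Theta$ bound for eigenspace perturbations, exploiting the spectral gap of at least $\alpha$ guaranteed by the $\alpha$-FDT construction.

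First I would exploit the fact that $\ccalL' = \ccalL + \bbA$ is a self-adjoint perturbation, so Weyl's inequality gives $|\lambda_i' - \lambda_i| \leq \|\bbA\| = \epsilon$ for every $i$. Combined with $\epsilon \leq \alpha$, this guarantees that the $\alpha$-separation structure survives the perturbation: each $\lambda_i$ with $i \in \ccalD$ remains isolated from the rest of the spectrum of $\ccalL'$ by at least $\alpha - \epsilon$, and each group $\ccalN_n$ remains separated from the complementary spectrum of $\ccalL'$ by the same margin. Next I would expand
\begin{align*}
\bbh(\ccalL)f &= \sum_{i\in\ccalD} h(\lambda_i)\langle f,\bm\phi_i\rangle\bm\phi_i + \sum_{n=1}^N C_n\, \Pi_n f, \\
\bbh(\ccalL')f &= \sum_{i\in\ccalD} h(\lambda_i')\langle f,\bm\phi_i'\rangle\bm\phi_i' + \sum_{n=1}^N C_n\, \Pi_n' f,
\end{align*}
where $\Pi_n,\Pi_n'$ are the orthogonal projectors onto $E_n$ and its perturbed counterpart $E_n'$, and then, for each $i \in \ccalD$, add and subtract $h(\lambda_i')\langle f,\bm\phi_i\rangle\bm\phi_i$ to split the $i$-th term into a frequency-response mismatch $(h(\lambda_i)-h(\lambda_i'))\langle f,\bm\phi_i\rangle\bm\phi_i$ and an eigenfunction mismatch $h(\lambda_i')[\langle f,\bm\phi_i\rangle\bm\phi_i - \langle f,\bm\phi_i'\rangle\bm\phi_i']$.

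I would then bound each piece separately. The response mismatch is controlled by Lipschitz continuity, $|h(\lambda_i)-h(\lambda_i')| \leq B\epsilon$, and accumulates over the $D$ isolated eigenvalues into the $BD\epsilon\|f\|$ summand. The eigenfunction mismatch on $\ccalD$ is controlled by Davis--Kahan applied to the simple isolated eigenvalue $\lambda_i$ with gap at least $\alpha-\epsilon$, giving $\|\bm\phi_i - \bm\phi_i'\| \leq \pi\epsilon/(2(\alpha-\epsilon))$; together with $|h(\lambda_i')|<1$ this accumulates to a $\pi D\epsilon/(\alpha-\epsilon)\|f\|$ contribution after a triangle-inequality step. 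For the group contributions, the constancy of $h$ on each $\ccalN_n$ reduces the difference to $C_n(\Pi_n - \Pi_n')f$; the subspace version of Davis--Kahan with gap $\alpha-\epsilon$ around $E_n$ gives $\|\Pi_n - \Pi_n'\| \leq \pi\epsilon/(2(\alpha-\epsilon))$, and $|C_n|<1$ lets me sum over the $N$ groups into a $\pi N\epsilon/(\alpha-\epsilon)\|f\|$ contribution. Combining all pieces yields exactly the bound in \eqref{eqn:stability_filter}.

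The hard part will be making the Davis--Kahan step rigorous in the infinite-dimensional setting, particularly for the tail group of $\ccalN$, which by Proposition \ref{prop:finite_num} contains infinitely many eigenvalues and hence corresponds to an infinite-dimensional spectral projector. I must argue that the spectral-gap-based sine-theta bound continues to produce the finite constant $\pi/(\alpha-\epsilon)$ for this infinite-dimensional projector, and that the regrouping of the countable sum $\sum_i$ into the finite sums $\sum_{i\in\ccalD}$ and $\sum_{n=1}^N$ is valid in operator norm. Both are handled by the compactness of $\ccalM$ and the purely discrete spectrum of $\ccalL$, but careful bookkeeping is needed to keep the $D+N$ prefactor exact rather than losing it in the passage to the limit.
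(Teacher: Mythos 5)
Your proposal is correct and follows essentially the argument this theorem is built on (the conference paper itself omits the proof): Weyl's inequality for the eigenvalue shifts, a Davis--Kahan $\sin\Theta$ bound with gap $\alpha-\epsilon$ for the isolated eigenfunctions and the group projectors, and the $\ccalD$-versus-$\ccalN$ split whose bookkeeping reproduces the $BD\epsilon\|f\|$ term and the $\pi(D+N)\epsilon\|f\|/(\alpha-\epsilon)$ term (your projector treatment of the groups is in fact slightly tighter than needed). The only convention you should state explicitly is that $\bbh(\ccalL')$ assigns the constant $C_n$ to the perturbed eigenvalues associated with group $\ccalN_n$ (equivalently, $h$ is constant on a neighborhood of each group interval), since that is what licenses your expansion of $\bbh(\ccalL')f$; beyond that, only the routine phase alignment in the rank-one Davis--Kahan step and the infinite-dimensional tail-group issue you already flag remain to be written out.
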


Provided that Assumption \ref{ass:filter_function} is satisfied, FDT filters are thus stable to absolute perturbations of the operator $\ccalL$. Note that this assumption is rather reasonable, as there are no constraints on the value of the Lipschitz constant $B$. The stability bound depends on (i) the variability of the FDT filter as measured by the Lipschitz constant $B$ and (ii) its frequency difference threshold, both directly through $\alpha$, and indirectly through $D$ and $N$. It is also split into two terms. The first one arises from the difference between the eigenfunctions of $\ccalL$ and $\ccalL'$, 
while the second is a result of the distance between their eigenvalues.

We observe that stability is improved if the Lipschitz constant $B$ is small. However, this causes the filter to become less discriminative and give similar response to all spectral components. With a larger $\alpha$, the FDT filter identifies fewer eigenvalues as being $\alpha$-separated, which decreases $D$. While this reflects an increase in the number of groups $N$, a certain number of eigenvalues previously in $\ccalD$ ends up being replaced by a fewer number of groups in $\ccalN$, leading $D+N$ to also decrease. Thus, larger values of $\alpha$ improve stability, but this also happens at the cost of discriminability as filters with large $\alpha$ separate the spectrum more sparsely, i.e., they identify fewer eigenvalues as being $\alpha$-separated. Interestingly, this stability-discriminability trade-off does not depend on the magnitude of the frequencies amplified by the filter (as is the case in, e.g., \cite{gama2020stability, ruiz2020graphon}). Instead, it is associated with the filter's precision in telling neighboring frequencies apart wherever they mare in the spectrum.

\subsection{Neural network stability} \label{sec:mnns}
Manifold neural networks with banks of filters like the one in \eqref{eq:fdt-filter} inherit the stability properties of $\alpha$-FDT filters. This is demonstrated in Theorem \ref{thm:stability_nn} under Assumption \ref{ass:activation}.

\begin{assumption}\label{ass:activation}
 The activation function $\sigma$ is normalized Lipschitz continous, i.e., $|\sigma(a)-\sigma(b)|\leq |a-b|$, with $\sigma(0)=0$.
\end{assumption}
 
\begin{theorem}[Neural network stability]\label{thm:stability_nn}
 Let $\ccalM$ be a manifold with Laplace-Beltrami operator $\ccalL$. Let $f$ be a manifold signal and $\bm\Phi(\bbH,\ccalL,f)$ an $L$-layer manifold neural network on $\ccalM$ \eqref{eqn:mnn} with $F_0=F_L=1$ input and output features and $F_l=F,i=1,2,\hdots,L-1$ features per layer, and where the filters $\bbh(\ccalL)$ are $\alpha$-FDT filters [cf. Definition \ref{def:alpha-filter}]. 
 Consider an absolute perturbation $\ccalL'=\ccalL + \bbA$ of the Laplace-Beltrami operator $\ccalL$ [cf. Definition \ref{defn:absolute_perturbations}] where $\|\bbA\| = \epsilon \leq \alpha$. 
 Then, under Assumptions \ref{ass:filter_function} and \ref{ass:activation} it holds:
 \begin{align}\label{eqn:stability_nn}
 \begin{split}
    \|\bm\Phi(\bbH,\ccalL,f)-&\bm\Phi(\bbH,\ccalL',f)\| \\
    &\leq LF^{L-1}\left(\frac{\pi(D+N)}{\alpha-\epsilon}+BD\right)\epsilon \|f\|.
\end{split}
 \end{align}
where $D$ and $N$ are the cardinalities of sets $\ccalD$ and $\ccalN$.
 \end{theorem}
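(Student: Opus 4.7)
The plan is to propagate the single-filter bound from Theorem \ref{thm:stability_filter} through the layers of the MNN by induction on $l$, exploiting the normalized Lipschitz continuity of $\sigma$ (Assumption \ref{ass:activation}) and the non-amplifying property $|h(\lambda)|<1$ (Assumption \ref{ass:filter_function}). Writing $\Delta_l := \max_p \|f_l^p - f_l'^p\|$ and $M_l := \max_p \|f_l^p\|$, I would like to derive a two-term recursion of the form $\Delta_l \le F_{l-1} C \,M_{l-1} + F_{l-1}\Delta_{l-1}$ with $C := \bigl(\pi(D+N)/(\alpha-\epsilon) + BD\bigr)\epsilon$, and a one-term recursion $M_l \le F_{l-1} M_{l-1}$, then unroll both.

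The core step is the following decomposition at layer $l$. Using $|\sigma(a)-\sigma(b)|\le |a-b|$ and $\sigma(0)=0$, together with the triangle inequality,
\begin{align*}
\|f_l^p - f_l'^p\| &\le \Bigl\| \sum_{q=1}^{F_{l-1}} \bbh_l^{pq}(\ccalL) f_{l-1}^q - \bbh_l^{pq}(\ccalL') f_{l-1}'^q \Bigr\| \\
&\le \sum_{q} \|(\bbh_l^{pq}(\ccalL)-\bbh_l^{pq}(\ccalL'))f_{l-1}^q\| \\
&\quad + \sum_q \|\bbh_l^{pq}(\ccalL')(f_{l-1}^q - f_{l-1}'^q)\|.
\end{align*}
The first summand is bounded by $F_{l-1}\,C\,M_{l-1}$ via Theorem \ref{thm:stability_filter} applied filter-by-filter. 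The second summand is bounded by $F_{l-1}\,\Delta_{l-1}$ since the non-amplifying assumption implies $\|\bbh_l^{pq}(\ccalL')\|\le 1$ as a spectral operator. The same non-amplifying bound (without the perturbation term) yields $M_l \le F_{l-1} M_{l-1}$, and since $M_0=\|f\|$, $F_0=1$, $F_l=F$ for $1\le l \le L-1$, this gives $M_l \le F^{l-1}\|f\|$ for $l\ge 1$.

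Plugging this into the $\Delta_l$ recursion with $\Delta_0=0$, induction yields $\Delta_l \le l\,F^{l-1}C\|f\|$ for $1\le l \le L-1$; applying one final step at layer $L$ (with $F_{L-1}=F$, $F_L=1$) gives $\Delta_L \le F\cdot C\cdot F^{L-2}\|f\| + F\cdot(L-1)F^{L-2}C\|f\| = L\,F^{L-1}C\|f\|$, which is precisely the claimed bound. The main obstacle is purely bookkeeping: keeping the factors of $F_{l-1}$ tight at each layer so that the product telescopes to $F^{L-1}$ rather than $F^L$, and correctly absorbing the input and output layers where $F_0=F_L=1$. Everything else follows from chaining Theorem \ref{thm:stability_filter}, the Lipschitz property of $\sigma$, and the operator-norm bound furnished by Assumption \ref{ass:filter_function}.
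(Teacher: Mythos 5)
Your proposal is correct and follows essentially the same route as the paper's argument: split each layer's difference via the normalized Lipschitz nonlinearity and the triangle inequality into a filter-perturbation term (bounded by Theorem \ref{thm:stability_filter}) and a propagated term (bounded using $|h(\lambda)|<1$ from Assumption \ref{ass:filter_function}), then unroll the recursion with $F_0=F_L=1$ to obtain the factor $LF^{L-1}$. The bookkeeping you outline ($\Delta_l \le F_{l-1}(CM_{l-1}+\Delta_{l-1})$, $M_l \le F_{l-1}M_{l-1}$) telescopes exactly as in the paper, so no gap remains.
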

 
Deep neural networks on manifolds are thus also stable to absolute perturbations provided that the activation function is normalized Lipschitz. This assumption is satisfied by most common activation functions, such as the ReLU, the modulus function and the sigmoid. Here, the same general comments as in the case of Theorem \ref{thm:stability_filter} hold, with the difference that in Theorem \ref{thm:stability_nn} the stability bound also depends on the number of layers $L$ and the number of features per layer $F$ of the MNN.


{As graphs can be seen as discretizations of manifolds, therefore we can use graph neural networks to realize the function of manifold neural networks. Combined with the transferability analysis from manifolds to graphs, we could state that the stability result that we have in Theorem \ref{thm:stability_nn} can be extended to graph neural networks.}


\section{Numerical experiments} \label{sec:sims}

We take a graph neural network model to approximate the manifold neural network. We verify our results on a {graph neural network} supported on a wireless adhoc network with $n=50$ nodes within a range of $[-50m,50m]^2$ where nodes are placed randomly. The channel states of all links can be represented by a matrix $\bbH(t)$ with each element $[\bbH(t)]_{ij}:=h_{ij}(t)$ denotes the channel condition between node $i$ and node $j$. Consider the large-scale pathloss gain and a random fast fading gain, this can be written as: $h_{ij}=\log( d_{ij}^{-2.2} h^f)$, where $d_{ij}$ stands for the distance between node $i$ and $j$, while $h^f\sim  \text{Rayleigh}(2)$ is the random fading. We here study the power allocation problem among $n$ nodes over an AWGN channel with interference, with $\bbp(\bbH)=[p_1(\bbH),p_2(\bbH),\hdots,p_n(\bbH)]$ denoting the power allocated to each node under channel condition $\bbH$. the channel rate of node $i$ is represented as $r_i$. The goal is to maximize the sum rate capacity under a total power budget $P_{max}$. This can be numerically formulated as:
\begin{align}
\label{eqn:prob_sim}
r^*&=\max_{\bbp(\bbH)} \sum_{i=1}^n r_i\\
   s.t.\quad \nonumber &r_i=\mathbb{E}\left[  \log\left(1+\frac{|h_{ii}|^2 p_i(\bbH)}{1+ \sum\limits_{j\neq i} |h_{ij}|^2 p_j(\bbH)}\right) \right],\\
   \nonumber & \mathbb{E}[\bm{1}^T\bbp]\leq P_{max},\quad p_i(\bbH)\in \{0,p_0\}.
\end{align}

With the nodes and the links seen as the graph nodes and edges respectively, the channel matrix $\bbH$ can be seen as a graph shift operator, more specifically, an adjacency matrix. By formulating $\bbH$ into a Laplacian matrix, the problem can be solved with a graph neural network composed with our defined $\alpha$-FDT filters with $\alpha$ set as 0.001. By setting $h_k=h_j$ for $|\lambda_k-\lambda_j|\leq\alpha$ in \eqref{eqn:operator}, we can get an approximation of $\alpha$-FDT filters. After trained for $4000$ iterations, the graph neural network can achieve the optimal power allocation.  In physical world, the nodes are deployed in some specific spatial positions. The positions may change and this would cause perturbations to the underlying Laplacian matrix. To model this, we add a log-normal matrix to the original channel matrix $\bbH$. With the same trained graph neural network employed, we measure the stability by the difference of final sum-rate. 

\begin{figure}[t]
    \centering
    \includegraphics[width=0.4\textwidth]{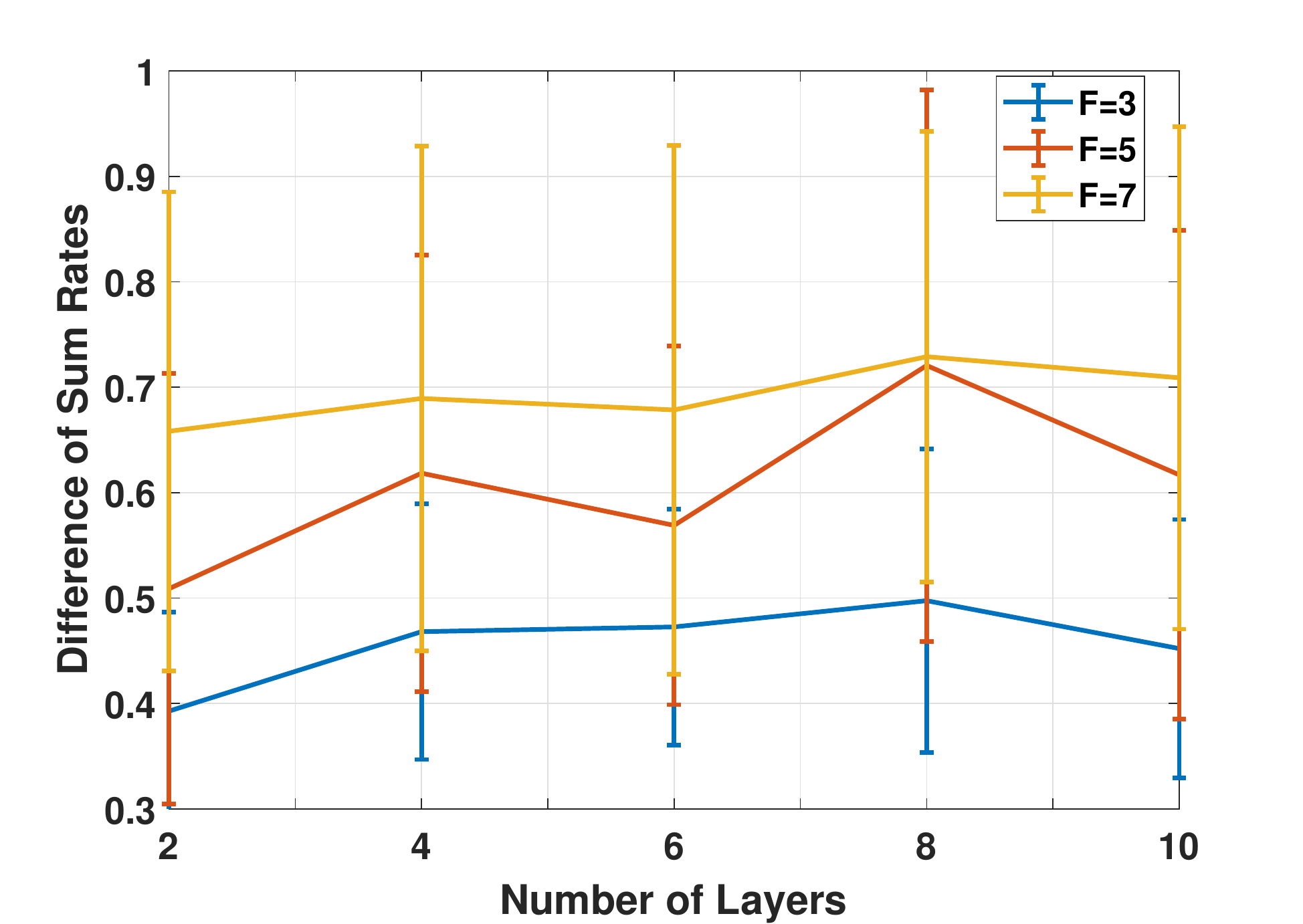}
    \caption{Sum-rate difference on the test between the original wirelss network setting and the perturbed one.}
    \label{fig:sim_stability}
\end{figure}

We can observe from Figure \ref{fig:sim_stability} that the difference of the final sum-rates increases with the number of layers and the number of filters per layer in the neural network. This verifies the conclusion that we have derived in Theorem \ref{thm:stability_nn}.


\section{Conclusions} \label{sec:conclusions}


In this paper, we have defined manifold convolutions and manifold neural networks. Considering the infinite dimensionality of Laplace-Beltrami operators. We import the definition of $\alpha$ frequency difference threshold filters to help separate the spectrum. By assigning a constant frequency response to the eigenvalues that are close enough, $\alpha$-FDT filters can be proved to be stable to absolute perturbations to the Laplacian operators. We further prove that the manifold neural networks built with $\alpha$-FDT filters are also stable under absolute perturbations. We conclude that there is a trade-off between the stability and discriminability. We further verified our results numerically with a power allocation problem in wireless adhoc networks.


\bibliographystyle{IEEEbib}
\bibliography{references}

\end{document}